\theoremstyle{plain}
\newtheorem{theorem}{Theorem}[section]
\newtheorem{proposition}[theorem]{Proposition}
\theoremstyle{definition}
\theoremstyle{remark}
\DeclareMathOperator*{\E}{\mathbb{E}}
\begin{document}
	
	\title{Coloured Tobit Kalman Filter}
	
	\author{Kostas Loumponias
		\thanks{K. Loumponias is with the Department of Mathematics, Aristotle University of Thessaloniki, GR-54124, Thessaloniki, Greece (e-mail: kostikasl@math.auth.gr) }
	}
	
	\maketitle
	
	\begin{abstract}
		
		This paper deals with the Tobit Kalman filtering (TKF) process when the one- dimensional measurements are censored and the noises of the state-space model are coloured. Two improvements of the standard TKF process are proposed. Firstly, the exact moments of the censored measurements are calculated via the moment generating function of the censored measurements. Secondly, coloured noises are considered in the proposed method in order to tackle real-life problems, where the white noises are not common. The designed process is evaluated using two experiments-simulations. The results show that the proposed method outperforms other methods in minimizing the Root Mean Square Error (RMSE) in both experiments.

	\end{abstract}

	%\begin{keywords}
	\textit{Keywords}: Censored Data, Moment Generating Function, Kalman Filtering, Coloured Noise.
	%\end{keywords}

	\section{Introduction}
	
	The well-known Kalman filter ( KF ) \cite{brown1992introduction} process is a recursive linear filter, which  provides estimates of the hidden state vectors by using a series of measurements observed over time. The KF process provides an optimal performance (i.e., unbiased estimations with minimum variance errors) when i) the  state and measurement equations are linear and ii) the measurement and process noises  are white (usually normally distributed). However in many real-life problems the state-space model is not linear, therefore,  KF  is not suitable for providing optimal estimates. The most known filters, that  have been proposed in order to overcome the problem of non-linearity, are the Extended Kalman filter (EKF) \cite{ljung1979asymptotic}, the Unscented Kalman Filter ( UKF ) \cite{wan2000unscented} and the Particle filter ( PF ) \cite{djuric2003particle}.

	In the case where  non-linearity is  due to censoring in the measurements \cite{dong2004food}, the UKF and EKF  result in calculating biased estimates for the hidden state vectors, as it has been shown in \cite{allik2014tobit,allik2014estimation}. PF  is able to cope with censored measurements, however, it imposes a heavy computational burden. To that end, Tobit Kalman filter ( TKF ) \cite{allik2015tobit} has been proposed to cope with censored measurements with a low computational cost. More specifically, the  TKF process provides unbiased estimates when dealing with censored measurements using the Tobit model of type I with two censoring limits \cite{loumponias2016using,loumponias2016usingb}.  In \cite{loumponias2018adaptive}, a multi-object tracking algorithm based on the TKF process was proposed, where the  exact variance of the censored measurement is calculated, while in \cite{allik2015tobit}  an approximated censored variance is used. In \cite{loumponias2018adaptive}, a brief proof for the calculation of the exact  censored variance is provided. Other variants of TKF dealing with fading measurements and correlated noises are proposed in \cite{geng2017tobit} and \cite{li2016tobit}, respectively.

	In all the aforementioned methods, the measurement and process noises are assumed to be white. Nevertheless, in many real problems \cite{kuhlmann2003kalman}, this assumption fails and the noises can be adequately described by $AR(p)$ models \cite{akaike1998autoregressive}; such noises are called “coloured” noises. In the case where the noises are coloured, the standard KF,  EKF ,  UKF etc. provide biased estimations, since, they cope with white noises. In order to overcome this drawback, the state-space model with coloured noises is written in the form of a system driven by white noises. To that end, two non-numerical methods are proposed, the augmented \cite{bryson1968estimation,popescu1998kalman}  and the measurement differencing \cite{anderson2012optimal,chen1986design} approach. 
	
	In the augmented approach, the  coloured noise of the measurement is included into the state vector. By doing so, the measurements of the augmented system are "perfect", i.e., they do not longer contain  noise. Hence, the covariance matrix of the measurement at time $ t $ given the a priori estimations of the state vector up to time $ t-1 $ may become ill-conditioned, i.e., a singular matrix. In the measurement differencing approach, the derived measurements are expressed in terms of the state vector by considering white noise. In effect, a linear combination of two measurements in sequence is determined in order to eliminate the coloured noise. However, the differencing approach leads to a risk of unstable solution when inaccurate observations occur \cite{gazit1997digital}.

	The main contribution of this paper is the establishment of the coloured TKF ( ColTKF) dealing with censored measurements when the process and measurement noises are coloured. In accordance with other studies dealing with censored measurements  \cite{allik2014estimation,allik2015tobit,li2016tobit,loumponias2016using}, the exact censored moments (order one to three) are calculated and also their properties are described. Furthermore, this paper deals with a) multi-dimensional hidden state vector, b) one-dimensional censored measurement (Tobit Type I) $\cite{wooldridge2002econometric}$ and c) coloured noises described by the $AR(1)$ model. To that end, the moment generating function (mgf) of the multivariate Gaussian distribution is calculated when a marginal variable is censored with two censoring limits. Then, the marginal mgfs of the censored and uncensored variables are derived. Therefore, censored moments -of order one to three- can be calculated and it is proved that the rest variables (the uncensored) are still normally distributed. Next,  ColTKF is derived by using the augmented approach and the moments of the censored measurements. Finally, the likelihood function of the censored measurements is utilised in order to estimate the unknown parameters of AR(1) models in coloured noises. 
	
	In this paper, the augmented approach is preferred, since, the censored measurements are one-dimensional, therefore, the computational burden is only slightly increased. Furthermore, in the measurement differencing approach, two latent measurements (not censored) in sequence are utilized in order the derived measurements to be expressed in terms of white noises; nevertheless, in the case of censoring, the latent measurements are not provided. The results in the simulations show that ColTKF has a better performance than TKF and the augmented KF, as it was expected. 
	
	The rest of the paper is organised as follows: In Section 2, the moments of a censored one-dimensional variable are calculated by means of the mgf. In Section 3, the proposed method (ColTKF) is provided. In Section 4, experimental results are illustrated using artificial data to show up  the effectiveness of the proposed process. Finally, in Section 5, conclusions are provided.

	\section{Moment Generating Function: Censored Case}
	Let $\textbf{x}\sim N(\textbf{m}, \textbf{S})$ and $\textbf{x}_{-k}=(x_1,...,x_{k-1},x_{k+1},...,x_{n})^T$, where $\textbf{m}=\{m_i\}_{i=1}^n$ and $\textbf{S}=\{S_{i,j}\}_{i,j=1}^n$. We consider the case where the $k^{th}$ random variable (rv) of $\textbf{x}$ is censored, symbolized by $x^c_{k}$.  More specifically, let
	\begin{equation}
	x^c_{k}=
	\begin{cases} 
	x_k,&a<x_k<b\\
	a,&x_k \leq a\\
	b,&x_k \geq b
	\end{cases}.
	\label{censored}
	\end{equation}
	where $a$ and $b$ are the censoring limits. 
	
	Then, the distribution of the random vector $\textbf{x}^c=(x_1,...,x_{k-1},x_k^c,x_{k+1},...,x_{n})^T$ is given by 
	\begin{align}
	\nonumber
	f_{\textbf{x}^c}(\textbf{x}^c) =& f_{\textbf{x}}(\textbf{x}^c)u_{(a,b)}(x_k) \\
	\nonumber			
	&+ \int_{-\infty}^{a} f_{\textbf{x}}(\textbf{x}^c)dx_k\delta(x_k-a) \\
	&+ \int_{b}^{+\infty} f_{\textbf{x}}(\textbf{x}^c)dx_k\delta(x_k-b), \label{dist}
	\end{align}
	where $f_\textbf{x}(\textbf{x})$ is the probability distribution function (pdf) of \textbf{x} and $\delta$ stands for the Kronecker delta function. Then,
	the  moment generating function (mgf) of the censored random vector $\textbf{x}^c$ can be derived by using (\ref{dist}): 
	
	\begin{proposition}
		For a normally distributed random vector $\textbf{x}\sim N(\textbf{m}, \textbf{S})$, the mgf of $\textbf{x}^c$ is given by 
		\begin{align*}
		M_{\textbf{x}^c}(\textbf{t}) &= exp(\textbf{t}^T\textbf{m} + \frac{1}{2}\textbf{t}^t\textbf{S}\textbf{t} )(F_{e_k}(b-m_k-\sum\limits_{i=1}^nS_{k,i}t_{i})-F_{e_k}(a-m_k-\sum\limits_{i=1}^nS_{k,i}t_{i}))\\
		&+exp(t_k a + \textbf{t}_0^T\textbf{m} + \textbf{t}_0^T\textbf{S}\textbf{t}_0/2)F_{e_k}(a-m_k-\sum\limits_{i=1}^nS_{k,i}t_{0,i})\\
		&+exp(t_k b + \textbf{t}_0^T\textbf{m}+ \textbf{t}_0^T\textbf{S}\textbf{t}_0/2)(1-F_{e_k}(b-m_k-\sum\limits_{i=1}^nS_{k,i}t_{0,i})),
		\end{align*}
		where $\textbf{t}_0=(t_1,...,t_{k-1},0,t_{k+1},...,t_n)^T$, $e_k\sim N(0,S_{k,k}$) and $F_{e_k}$ stands for the cumulative distribution of $e_k$.
		\label{p1}
	\end{proposition}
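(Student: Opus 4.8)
The plan is to compute $M_{\mathbf{x}^c}(\mathbf{t})=\E\!\left[\exp(\mathbf{t}^T\mathbf{x}^c)\right]$ by integrating $\exp(\mathbf{t}^T\mathbf{x}^c)$ directly against the density (\ref{dist}) of $\mathbf{x}^c$, which already displays it as the sum of an absolutely continuous part supported on $\{a<x_k<b\}$ and two atoms, sitting at $x_k=a$ and $x_k=b$ with weights $\int_{-\infty}^{a}f_{\mathbf{x}}\,dx_k$ and $\int_{b}^{+\infty}f_{\mathbf{x}}\,dx_k$. This produces three terms, which I would treat one at a time. The only analytic ingredient needed is the Gaussian ``completion of the square'' (exponential tilting) identity
\begin{equation*}
e^{\mathbf{t}^T\mathbf{x}}\,f_{\mathbf{x}}(\mathbf{x};\mathbf{m},\mathbf{S})=e^{\mathbf{t}^T\mathbf{m}+\frac12\mathbf{t}^T\mathbf{S}\mathbf{t}}\,f_{\mathbf{x}}(\mathbf{x};\mathbf{m}+\mathbf{S}\mathbf{t},\mathbf{S}),
\end{equation*}
which says that tilting by $\mathbf{t}$ shifts the mean to $\mathbf{m}+\mathbf{S}\mathbf{t}$ and leaves $\mathbf{S}$ unchanged; all the integrals below converge, being dominated by the (everywhere finite) Gaussian mgf integrand, so the interchanges of integration are legitimate.

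For the continuous term I would apply this identity with the full vector $\mathbf{t}$: the term becomes $e^{\mathbf{t}^T\mathbf{m}+\frac12\mathbf{t}^T\mathbf{S}\mathbf{t}}$ times $\Pr(a<y_k<b)$, where $\mathbf{y}\sim N(\mathbf{m}+\mathbf{S}\mathbf{t},\mathbf{S})$. Since the $k$-th marginal of $\mathbf{y}$ is $N\!\big(m_k+\sum_i S_{k,i}t_i,\ S_{k,k}\big)$, this probability equals $F_{e_k}\!\big(b-m_k-\sum_i S_{k,i}t_i\big)-F_{e_k}\!\big(a-m_k-\sum_i S_{k,i}t_i\big)$ with $e_k\sim N(0,S_{k,k})$, i.e. the first line of the claim. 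For the atom at $a$, the factor $\delta(x_k-a)$ pins $x_k=a$ inside $\exp(\mathbf{t}^T\mathbf{x}^c)$, so that exponential becomes $e^{t_k a}\exp(\mathbf{t}_0^T\mathbf{x})$ with $\mathbf{t}_0$ equal to $\mathbf{t}$ but with the $k$-th entry set to $0$; what is left is $e^{t_k a}$ times the integral of $\exp(\mathbf{t}_0^T\mathbf{x})f_{\mathbf{x}}(\mathbf{x})$ over $\mathbf{x}_{-k}\in\mathbb{R}^{n-1}$ and $x_k\in(-\infty,a)$. Applying the tilting identity now with $\mathbf{t}_0$ turns this into $e^{t_k a}\,e^{\mathbf{t}_0^T\mathbf{m}+\frac12\mathbf{t}_0^T\mathbf{S}\mathbf{t}_0}\,\Pr(y_k\le a)$ for $\mathbf{y}\sim N(\mathbf{m}+\mathbf{S}\mathbf{t}_0,\mathbf{S})$, that is $e^{t_k a}\,e^{\mathbf{t}_0^T\mathbf{m}+\frac12\mathbf{t}_0^T\mathbf{S}\mathbf{t}_0}F_{e_k}\!\big(a-m_k-\sum_i S_{k,i}t_{0,i}\big)$ --- the second line. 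The atom at $b$ is identical with $a\mapsto b$ and $\Pr(y_k\le a)\mapsto\Pr(y_k\ge b)=1-F_{e_k}\!\big(b-m_k-\sum_i S_{k,i}t_{0,i}\big)$, giving the third line, and adding the three contributions is the stated formula.

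So the steps, in order, are: (i) record the tilting identity and note that the mgf integrals converge; (ii) evaluate the continuous term; (iii) evaluate the $x_k=a$ atom; (iv) evaluate the $x_k=b$ atom; (v) sum. I expect the only delicate point to be the bookkeeping around the atoms rather than any computation --- each of the three terms is a single application of the same Gaussian identity. Specifically, one must keep straight that in a weight such as $\int_{-\infty}^{a}f_{\mathbf{x}}\,dx_k$ the symbol $x_k$ is a dummy variable that has been integrated away (leaving a function of $\mathbf{x}_{-k}$ only), while the accompanying $\delta(x_k-a)$ then reinstates the $k$-th coordinate frozen at the value $a$; getting this right ensures the $k$-th coordinate is neither double-counted nor lost, after which the three evaluations are routine.
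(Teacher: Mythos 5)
Your proposal is correct and takes essentially the same route as the paper: the same three-term decomposition of the expectation induced by the density (\ref{dist}) (continuous part plus the two atoms at $a$ and $b$), with each term reduced to a shifted Gaussian probability; your ``tilting identity'' is exactly the paper's completion of the square via the substitutions $\textbf{x}^*=\textbf{x}-\textbf{m}$, $\textbf{j}=\textbf{S}\textbf{t}$ (resp.\ $\textbf{j}_0=\textbf{S}\textbf{t}_0$) and $\textbf{e}=\textbf{x}^*-\textbf{j}$. No substantive difference.
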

	
	\begin{proof}
		We have that\\
		\begin{equation}
		\begin{aligned} 
		M_{\textbf{x}^c}(\textbf{t}) = Ee^{\textbf{t}^T\textbf{x}} =&
		\int_{R^{n-1}}\int_a^b e^{\textbf{t}^T\textbf{x}}f_{\textbf{x}}(\textbf{x})d\textbf{x}\\
		&+\int_{R^{n-1}}e^{\textbf{t}_{-k}^T\textbf{x}_{-k}+t_ka}\Big(\int_{-\infty}^a f_{\textbf{x}}(\textbf{x})dx_k\Big)d\textbf{x}_{-k}\\
		&+\int_{R^{n-1}}e^{c^T\textbf{x}_{-k}+t_kb}\Big(\int^{\infty}_b f_{\textbf{x}}(\textbf{x})dx_k\Big)d\textbf{x}_{-k},
		\end{aligned}
		\label{moment}
		\end{equation}
		where $\textbf{t}_{-k}=(t_1,...,t_{k-1},t_{k+1},...,t_n)^T$. The first term on the right hand side of (\ref{moment}) reads
		\begin{equation*}
		A_1(\textbf{t}) = \int_{R^{n-1}}\int_a^b e^{\textbf{t}^T\textbf{x}}
		\frac{1}{2\pi|\textbf{S}|^{1/2}}exp(-\frac{1}{2}(\textbf{x}-\textbf{m})^T\textbf{S}^{-1}(\textbf{x}-\textbf{m}))d\textbf{x} ,
		\label{A1_1}
		\end{equation*}
		which for $\textbf{x}^* = \textbf{x} -\textbf{m}$  yields
		\begin{align}
		\nonumber
		A_1(\textbf{t}) &= \int_{R^{n-1}}\int_{a-m_k}^{b-m_k} e^{\textbf{t}^T(\textbf{x}^*+\textbf{m})}
		\frac{1}{2\pi|\textbf{S}|^{1/2}}exp(-\frac{1}{2}\textbf{x}^{*T}\textbf{S}^{-1}\textbf{x}^*)d\textbf{x}^*\\
		\nonumber
		&=\frac{e^{\textbf{t}^T\textbf{m}}}{2\pi|\textbf{S}|^{1/2}}
		\int_{R^{n-1}}\int_{a-m_k}^{b-m_k}exp(-\frac{1}{2}\textbf{x}^{*T}\textbf{S}^{-1}\textbf{x}^* + \textbf{t}^T\textbf{x}^* )d\textbf{x}^*.
		%\label{A1_2}
		\end{align}
		Then, for $\textbf{j}=\textbf{S}\textbf{t}$  we get
		\begin{align}
		\nonumber
		A_1(\textbf{t}) &=\frac{e^{\textbf{t}^T\textbf{m}}}{2\pi|\textbf{S}|^{1/2}}
		\int_{R^{n-1}}\int_{a-m_k}^{b-m_k}exp(-\frac{1}{2}(\textbf{x}^{*}-\textbf{j})^T\textbf{S}^{-1}(\textbf{x}^*-\textbf{j}) + \frac{1}{2}\textbf{t}^T\textbf{S}\textbf{t} )d\textbf{x}^*\\
		&=\frac{exp(\textbf{t}^T\textbf{m} + \frac{1}{2}\textbf{t}^t\textbf{S}\textbf{t} )}{2\pi|\textbf{S}|^{1/2}}\int_{R^{n-1}}\int_{a-m_k}^{b-m_k}exp(-\frac{1}{2}(\textbf{x}^{*}-\textbf{j})^T\textbf{S}^{-1}(\textbf{x}^*-\textbf{j}))d\textbf{x}^*.
		\label{A1_3}
		\end{align}
		Now, for $\textbf{e} = \textbf{x}^* - \textbf{j}$, (\ref{A1_3}) leads to
		\begin{align}
		\nonumber
		A_1(\textbf{t})&=\frac{exp(\textbf{t}^T\textbf{m} + \frac{1}{2}\textbf{t}^t\textbf{S}\textbf{t} )}{2\pi|\textbf{S}|^{1/2}}\int_{R^{n-1}}\int_{a-m_k-j_k}^{b-m_k-j_k}exp(-\frac{1}{2}\textbf{e}^T\textbf{S}^{-1}\textbf{e})d\textbf{e}\\
		&=exp(\textbf{t}^T\textbf{m} + \frac{1}{2}\textbf{t}^t\textbf{S}\textbf{t} )(F_{e_k}(b-m_k-j_k)-F_{e_k}(a-m_k-j_k)),
		\label{A1_4}   
		\end{align}
		where $F_{e_k}(e)$ is the marginal cumulative function of the random variable $e_k\sim N(0,S_{k,k})$, and $j_k = \sum\limits_{i=1}^nS_{k,i}t_{i}$.
		
		Next, the second term of (\ref{moment}) is computed as follows:
		
		\begin{align}
		\nonumber
		A_2(\textbf{t}) &= \frac{e^{t_k a}}  {2\pi|\textbf{S}|^{1/2}}\int\limits_{R^{n-1}}\int\limits_{-\infty}^a
		exp(-\frac{1}{2}(\textbf{x}-\textbf{m})^T\textbf{S}^{-1}(\textbf{x}-\textbf{m})+\textbf{t}^T_{-k}\textbf{x}_{-k})d\textbf{x}\\
		&=\frac{exp(t_k a + \textbf{t}^T_{-k}\textbf{m}_{-k})}{2\pi|\textbf{S}|^{1/2}}\int\limits_{R^{n-1}}\int\limits_{-\infty}^{a-m_k}
		exp(-\frac{1}{2}\textbf{x}^{*T}\textbf{S}^{-1}\textbf{x}^*+\textbf{t}^T_{-k}\textbf{x}_{-k})d\textbf{x}^{*}.
		\label{A2_1}  
		\end{align}
		Then, for $\textbf{j}_0=\textbf{S}\cdot\textbf{t}_0$, (\ref{A2_1}) becomes
		
		\begin{align}
		\nonumber
		A_2(\textbf{t}) &= \frac{exp(t_k a + \textbf{t}_0^T\textbf{m} + \textbf{t}_0^T\textbf{S}\textbf{t}_0/2)}{2\pi|\textbf{S}|^{1/2}} \kern-0.8em
		\int\limits_{R^{n-1}} \kern-0.4em \int\limits_{-\infty}^{a-m_k}  \kern-0.8em
		exp(-\frac{1}{2}(\textbf{x}^*-\textbf{j}_0)^{*T}\textbf{S}^{-1}(\textbf{x}^*-\textbf{j}_0))d\textbf{x}^{*}\\
		\nonumber
		&=\frac{exp(t_k a + \textbf{t}_0^T\textbf{m}+ \textbf{t}_0^T\textbf{S}\textbf{t}_0/2)}{2\pi|\textbf{S}|^{1/2}} \kern-0.8em
		\int\limits_{R^{n-1}} \kern-0.4em \int\limits_{-\infty}^{a-m_k-j_{0,k}}  \kern-0.8em
		exp(-\frac{1}{2}\textbf{e}^T\textbf{S}^{-1}\textbf{e})d\textbf{e}\\
		&=exp(t_k a + \textbf{t}_0^T\textbf{m} + \textbf{t}_0^T\textbf{S}\textbf{t}_0/2)F_{e_k}(a-m_k-j_{0,k}),
		\label{A2_2}
		\end{align}
		where $j_{0,k} = \sum\limits_{i=1}^nS_{k,i}t_{0,i}$.\\
		In the same way as for the second term, the third term of  (\ref{moment}) is given by
		\begin{equation}
		A_3(\textbf{t})=exp(t_k b + \textbf{t}_0^T\textbf{m} + \textbf{t}_0^T\textbf{S}\textbf{t}_0/2)(1-F_{e_k}(b-m_k-j_{0,k})).
		\label{A3_1}
		\end{equation}
		Finally, we get the analytic form of the MGF  by  substituting (\ref{A1_4}), (\ref{A2_2}) and (\ref{A3_1}) into (\ref{moment}). 
	\end{proof}
	%\qed
	
	It is derived from Proposition \ref{p1}, that the mgf of the marginal $\textbf{x}_{-k}$ is equal with \\
	\begin{equation}
	M_{\textbf{x}_{-k}}(\textbf{t}) = M_{\textbf{x}^c}(\textbf{t}_0) = exp(\textbf{t}_0^T\textbf{m} + \frac{1}{2}\textbf{t}_0^t\textbf{S}\textbf{t}_0 ),
	\label{mrg1}
	\end{equation}
	thus, $\textbf{x}_{-k}\sim N(\textbf{m}_{-k},\textbf{S}_{-k,-k})$. We notice that this result does not hold in the case where the random variable $x_k$ is truncated \cite{arismendi2013multivariate}.
	
	In the same way, the mgf of the censored variable $x^c_k$ is given by 
	
	\begin{align}
	\nonumber
	M_{x^c_k}(t_k) &= exp(t_km_k + \frac{1}{2}t^2_kS_{k,k})(F_{e_k}(b-m_k-S_{k,k}t_k)-F_{e_k}(a-m_k-S_{k,k}t_k))\\
	&+exp(t_k a)F_{e_k}(a-m_k) + exp(t_k b)(1-F_{e_k}(b-m_k)),
	\label{mrg2}
	\end{align}
	which has the same form as in \cite{loumponias2018adaptive}. 
	
	Next, the censored mean, variance and skewness of $x^c_k$,  can be calculated by (\ref{mrg2}):
	\begin{align}
	\nonumber
	\E(x^c_k) &= \frac{dM_{x^c_k}(t)}{dt}\Big|_{t=0} = aF_{e_k}(a-m_k) + b(1-F_{e_k}(b-m_k))\\
	&+(F_{e_k}(b-m_k)-F_{e_k}(a-m_k))m_k + S_{k,k}(f_{e_k}(a-m_k)-f_{e_k}(b-m_k)),
	\label{cen_mean}
	\end{align}
	\begin{align}
	\nonumber
	Var(x^c_k) &= \frac{d^2M_{x^c_k}(t)}{dt^2}\Big|_{t=0} - \big(\E(x^c_k)\big)^{2}=\\
	\nonumber
	&=a^2F_{e_k}(a-m_k)(1-F_{e_k}(a-m_k))+b^2F_{e_k}(b-m_k)(1-F_{e_k}(b-m_k))\\
	\nonumber
	&+m_kP(1-P) + S^2_{k,k}P +  2m_kS_{k,k}(f_{e_k}(a-m_k)-f_{e_k}(b-m_k))\\
	\nonumber
	&+S^2_{k,k}((a-m_k)f_{e_k}(a-m_k)-(b-m_k)f_{e_k}(b-m_k))\\
	\nonumber
	&-2abF_{e_k}(a-m_k)(1-F_{e_k}(b-m_k))\\ 
	\nonumber
	&-S^2_{k,k}(f_{e_k}(a-m_k)-f_{e_k}(b-m_k))^2\\
	\nonumber
	&-2\big[Pm_k + S_{k,k}(f_{e_k}(a-m_k)-f_{e_k}(b-m_k))\big]\cdot\\
	&\cdot\big[aF_{e_k}(a-m_k) + b(1-F_{e_k}(b-m_k))\big],
	\label{cen_var}
	\end{align}
	where $P=F_{e_k}(b-m_k)-F_{e_k}(a-m_k)$. 
	
	The third moment of $x^c_k$ is given by 
	\begin{align}
	\nonumber
	\E\big(x^{c3}_k\big) &=\frac{d^3M_{x^c_k}(t)}{dt^3}\Big|_{t=0} =\\ 
	\nonumber
	&=(m_k^3 + 3m_ks^2)P + a^3F_{e_k}(a-m_k) + b^3(1-F_{e_k}(b-m_k))\\
	\nonumber
	&+(2s^3 + 3m_k^2s)(\phi(a^*)-\phi(b^*)) + 3m_ks^2(a^*\phi(a^*)-b^*\phi(b^*))\\
	&+s^3(a^{*2}\phi(a^*)-b^{*2}\phi(b^*)),\label{cen_mom3}
	\end{align}
	where $s= \sqrt{S_{k,k}}$, $\phi(x)$ stands for the   probability density function of the standard normal distribution, 
	$a^* = (a-m_k)/s$ and $b^* = (b-m_k)/s$. Then, the coefficient of the censored skewness, $\gamma^c_k$, \cite{azzalini2013skew} is calculated by substituting (\ref{cen_mean})-(\ref{cen_mom3}) into 
	\begin{equation}
	\gamma^c_k = \frac{\E\big(x^{c3}_k\big)- 3\E\big(x^{c}_k\big)Var\big(x^c_k\big) -\E\big(x^{c}_k\big)^3}{Var(x^c_k)^{(3/2)}}.
	\label{cen_skw}
	\end{equation}
	
	Furthermore, the covariance of the variables $x_i$ and $x^c_k$, for $i \neq k$, can be calculated by means of Proposition \ref{p1} and (\ref{cen_mean}):
	
	\begin{align}
	\nonumber
	Cov(x_i,x_k^c)&=\E(x_ix_k^c) - \E(x_i)\E(x_k^c)\\  
	\nonumber
	&=\frac{dM_{x_i,x_k^c}(t_i,t_k)}{dt_idt_k}\Big|_{(t_i,t_j)=(0,0)} - m_i\E(x_k^c)\\
	&=P\cdot S_{i,k},
	\label{cen_cov}
	\end{align}
	where $S_{i,k}=cov(x_i,x_k)$. Furthermore, it is derived by (\ref{cen_mean}) and (\ref{cen_skw}) that 
	\[m^c_k = m_k\]
	and
	\[\gamma^c_k = 0,\]
	for the censoring limits $a$ and $b$, which fulfill the conditions, $a < m_k$  and $ m_k = (a+b)/2$. 
	
	$\textbf{An illustrative example}$. In order to verify the aforementioned results, let us consider the censored mean vector  $\textbf{m}^c$, covariance matrix $\textbf{S}^c$, and the coefficients of skewness $\textbf{g}^c=(g^c_i)_{i=1}^3$, of the random variable $  \textbf{X} \sim N(\textbf{m}, \textbf{S})$, where $ \textbf{m}=(1,1,1)$ and 
	$ \textbf{S} =
	\begin{bmatrix}
	2  &  1  &  1\\
	1  &  2  &  2\\
	1  &  2&  2
	\end{bmatrix}, $ when the censoring limits for the r.v. $X_3$ are $a =0.5$ and $b=2$. Then, we proceed as follows: 1) we produce $ 10^6$ random measurements from $ N(\textbf{m}, \textbf{S}) $ 100 times. 2) Each time, we calculate the sampling mean vector, the covariance matrix and the coefficients of skewness, derived from the censored measurements. 3) We calculate the average, $\textbf{m}_{sam}$, $ \textbf{S}_{sam} $ and $\textbf{g}_{sam}$ of the 100 samples mean vectors, covariance matrices and coefficients of skewness,  respectively. 4) The mean vector, $\textbf{m}^c$, the covariance matrix, $ \textbf{S}^c $ and the coefficients of skewness, $\textbf{g}^c$,  are calculated by (\ref{mrg1}), (\ref{cen_mean})-(\ref{cen_cov}). As it can be seen in (\ref{first})-(\ref{last}), the proposed $\textbf{m}^c$, $\textbf{S}^c$ and $\textbf{g}^c$, are almost identical with the corresponding results from the  sample, 
	
	\begin{align}
	\label{first}
	\textbf{m}_{sam} &= (0.9999, 1.0000, 1.1495)\\
	\textbf{m}^c &= (1.0000, 1.0000, 1.1494)\\
	\textbf{S}_{sam} &=
	\begin{bmatrix}
	2.0003  &  1.002  &  0.3985\\
	1.0002  &  1.9998  &  0.7968\\
	0.3985  &  0.7968&  0.4003
	\end{bmatrix}\\
	\textbf{S}^c &=
	\begin{bmatrix}
	2.0000  &  1.0000  &  0.3984\\
	1.0000  &  2.0000  &  0.7968\\
	0.3984  &  0.7968&  0.4003
	\end{bmatrix}\\
	\textbf{g}_{sam} &= (0.0001, -0.0001, 0.2654)\\
	\label{last} 
	\textbf{g}^c &= (0.0000, 0.0000, 0.2657).
	\end{align}

	\section{Tobit Kalman Filter with Coloured Noise}
	
	The state-space model with censored measurements and coloured noises is defined as
	
	\begin{align}
	\label{state_x}
	\textbf{x}_{t+1}&=\textbf{A}\textbf{x}_{t} +\textbf{u}_{t},\\
	\textbf{u}_{t}&=\textbf{C}\textbf{u}_{t-1} +\textbf{w}_{1,t},\\
	y_{t}&=
	\begin{cases}
	a, \quad y^*_{t} \leq a\\
	{y}^*_{t}=\textbf{H}\textbf{x}_{t} +{v}_{t},\quad a<y^*_{t}<b\\
	b, \quad y^*_{t} \geq b\\
	\end{cases}\\
	v_{t}&= g\cdot v_{t-1} + {w}_{2,t},
	\label{colored}
	\end{align}
	where $\textbf{A}$ and $\textbf{H}$ are the transition and observation matrix, respectively, $y^*_{t}$, $ y_t $ and $\textbf{x}_t\in\Re^n$ are the latent measurement, the censored measurement and the unknown state vector at time frame $t$, respectively, while $ a $ and $ b $ are the censoring limits.  $\textbf{w}_{1,t}$, ${w}_{2,t}$ are white noises (hence, of zero mean) with covariance matrix $\textbf{Q}$ and variance $r^2$, respectively, while $\textbf{u}_t$, $ v_t $ are coloured noises generated by the associated AR(1) models driven by  matrix $ \textbf{C} $ and scalar $ g $, respectively. To overcome the problem of the coloured noises, the system given by (\ref{state_x})-(\ref{colored}) is expressed  as a system with white noise using the augmented approach. For that purpose let $\textbf{A}_{aug} =     
	\begin{bmatrix}
	\textbf{A}  &  \textbf{I}_n & \textbf{0}\\
	\textbf{0}_n  &  \textbf{C} & \textbf{0}\\
	\textbf{0}^{T}  & \textbf{0}^{T} & g\\
	\end{bmatrix}$, 
	$\textbf{z}_t = 
	\begin{bmatrix}
	\textbf{x}_t \\
	\textbf{u}_t\\ 
	v_t\\
	\end{bmatrix}$, 
	$\textbf{H}_{aug} = [\textbf{H} \quad \textbf{0}^{'}\quad 1]$,
	$\textbf{w}_{aug,t} = 
	\begin{bmatrix}
	\textbf{0}\\
	\textbf{w}_{1,t}\\
	{w}_{2,t}\\ 
	\end{bmatrix}$ $\sim N(\textbf{0},\textbf{Q}_{aug}) $, 
	$\textbf{Q}_{aug} =     
	\begin{bmatrix}
	\textbf{0}_n  &  \textbf{0}_n & \textbf{0}\\
	\textbf{0}_n  &  \textbf{Q} & \textbf{0}\\
	\textbf{0}^{T}  & \textbf{0}^{T} & r^2\\
	\end{bmatrix}$, where $ \textbf{0}_n $ and $ \textbf{0} $ denote the $ n \times n $ zero matrix  and the $ n \times 1$ zero vector, respectively. 
	Then, the state-space model (\ref{state_x})-(\ref{colored}) can be written in the form: 
	
	\begin{align}
	\label{state_aug}
	\textbf{z}_{t+1}&=\textbf{A}_{aug}\textbf{z}_{t} +\textbf{w}_{aug,t+1},\\
	y_{t}&=
	\begin{cases}
	a, \quad y^*_{t} \leq a\\
	{y}^*_{t}=\textbf{H}_{aug}\textbf{z}_{t} \quad a<y^*_{t}<b \\
	b, \quad y^*_{t} \geq b
	\label{colored_aug}
	\end{cases}.
	\end{align}
	
	As it can be seen in (\ref{colored_aug}) the latent measurement $ y^*_t $ is noise-free (i.e., a perfect measurement). The linear optimal estimates for noise-free measurements (\ref{state_aug})-(\ref{colored_aug}) have the same form (with the corresponding new matrices in the augmented model) as in the case of the state-space model with white noises, except that the variance of measurement noise in the augmented model equals 0.
	
	In this paper, as in \cite{allik2015tobit,geng2017tobit,loumponias2018adaptive}, the a posteriori estimation of the state vector, $ \hat{\textbf{z}}_{t}$, is calculated as a linear combination of the a priori estimation of the state vector, $ \hat{\textbf{z}}^-_{t} $, and the censored measurement $ y_t $. Although these estimations are not optimal, it is proved that they minimize the trace of the state error covariance matrix \cite{anderson2012optimal}. More specifically, the proposed method -as in standard KF- evolves in two stages, the predict and the update stage, respectively:
	
	\textit{Predict Stage}
	\begin{align}
	\nonumber
	\hat{\textbf{z}}_t^- &=\E(\textbf{z}_t|y_{1:t-1}), \\ 
	\nonumber
	\textbf{P}_t^- &=Cov(\textbf{z}_t-\hat{\textbf{z}}_t^-|y_{1:t-1})
	\end{align}
	\textit{Update Stage}
	\begin{align}
	\nonumber
	\textbf{K}_t  &= Cov(\textbf{z}_t,y_t|y_{t-1})\cdot Var(y_t|y_{t-1})^{-1},\\
	\label{z_aug}
	\hat{\textbf{z}}_t  &= \hat{\textbf{z}}^-_t+\textbf{K}_t(y_t-\E(y_t|y_{t-1})),\\
	\label{P_aug}
	\textbf{P}_t  &= \textbf{P}^-_t-\textbf{K}_t\cdot Cov(\textbf{x}_t,y_t|y_{k-1})^{T},
	\end{align}
	where $\textbf{P}_t^-$ and $\textbf{P}_t$ are the covariance matrices of the a priori and a posteriori error estimations, respectively. $\E(y_t|y_{t-1})$ and $Var(y_t|y_{t-1})$ are the mean and variance of the censored measurement $y_t$ given the censored measurements up to time $ t-1$, while $Cov(\textbf{z}_t,y_t|y_{t-1})$ is the cross-covariance matrix of the augmented state and the censored measurement at time $t$.
	
	The predict stage is the same as in the  case of the standard KF, since the censored measurements are not used in this stage. Therefore the a priori estimations are given by 
	\begin{align}
	\hat{\textbf{z}}_t^- &= \textbf{A}_{aug}\hat{\textbf{z}}_{t-1},\\
	{\textbf{P}}_t^- &= \textbf{A}_{aug}{\textbf{P}}_{t-1}\textbf{A}_{aug}^T + \textbf{Q}_{aug}. 
	\label{P_prior}  
	\end{align}
	
	It is clear by (\ref{state_aug}) and (\ref{colored_aug}) that the joint distribution of $ \textbf{z}_t $ and $ y^*_t $ is Gaussian and more specifically $(\textbf{z}_t, y^*_t|t-1) \sim N(\textbf{m}, \textbf{S} )$, where
	$\textbf{m} = 
	\begin{bmatrix}
	\hat{\textbf{z}}_t^- \\
	\textbf{H}_{aug}\hat{\textbf{z}}_t^-\\ 
	\end{bmatrix}$ and 
	$\textbf{S} =     
	\begin{bmatrix}
	\textbf{P}_t^{-}  & Cov(\textbf{z}_t,y^*_t|t-1)  \\
	Cov(\textbf{z}_t,y^*_t|t-1) &  Var(y_t^{*}|t-1) \\
	\end{bmatrix}$. Furthermore, it can be proven that
	$ Cov(\textbf{z}_t,y^*_t|t-1) = \textbf{P}^-_{t}\textbf{H}_{aug}^{T} $ and $ Var(y_t^{*}|t-1) = \textbf{H}_{aug}\textbf{P}^-_{t} \textbf{H}_{aug}^{T}$. Next, the results of Proposition \ref{p1} are utilized to cope with the censored moments in the Update Stage.  The censored moments $\E(y_t|t-1)$, $Var(y_t|t-1)$ and $Cov(\textbf{z}_t,y_t|t-1)$ are calculated by (\ref{cen_mean}), (\ref{cen_var}) and (\ref{cen_cov}), respectively. Summarizing, the proposed Update Stage for censored measurements is calculated as follows:
	
	\begin{enumerate}
		
		\item $\E(y_t|t-1)$:   is calculated by (\ref{cen_mean}), by substituting $m_k = \textbf{H}_{aug}\hat{\textbf{z}}_t^{-} $ and $S_{k,k} =  \textbf{H}_{aug}\textbf{P}^-_{t} \textbf{H}_{aug}^{T}$.
		
		\item  $Var(y_t|t-1)$:   is calculated by (\ref{cen_var}).
		
		\item $Cov(\textbf{z}_t,y_t|t-1)$:   is calculated by (\ref{cen_cov}) and is equal with,
		\begin{equation}
		Cov(\textbf{z}_t,y_t|t-1) =  \textbf{P}^-_{t}\textbf{H}_{aug}^{T}\cdot P,
		\end{equation}
		where $P$ is the probability of $y^*_t$ to belong into the uncensored interval $(a,b)$ and is given by, 
		\begin{equation*}
		P=F_{e_k}(b-m_k)-F_{e_k}(a-m_k),
		\end{equation*}
		where $e_k\sim N(0,S_{k,k})$.
		
	\end{enumerate}
	Then, the estimation of the state vector $\textbf{z}_t$ and the corresponding covariance matrix of the estimation error are calculated by (\ref{z_aug}) and (\ref{P_aug}), respectively.
	Hereinafter, we denote the filtering process that takes into account the corrected censored moments and does not consider coloured noises,  by TKF$^c$.
	
	The proposed process described by (\ref{z_aug})-(\ref{P_prior}) can only be applied when the AR(1) models of coloured noises ($\textbf{C}$ and $ g $) are assumed to be known.  However, in real-life problems these parameters are unknown. In order to overcome this problem, the Likelihood Function (LF) of the censored measurements is utilised to estimate the parameters of the coloured noises.
	The LF for the censored measurements $ \{y_{t}\}_{t=1}^{T} $ given in (\ref{colored_aug}) with censoring limits $ a $ and $ b $ has the form \cite{loumponias2016using,loumponias2016usingb},
	
	\begin{align}
	L(\textbf{y}) =& {\displaystyle\prod_{ a <~ y_{t} <~b} \frac{1}{(\textbf{H}_{aug}\textbf{P}^-_{t}\textbf{H}_{aug}^T)^{1/2}} \phi    \Bigg(\frac{y_{t}-\textbf{H}_{aug}\hat{\textbf{z}}^-_{t}}{(\textbf{H}_{aug}\textbf{P}^-_{t}\textbf{H}_{aug})^{1/2}} \Bigg) } \nonumber\\  
	&\times{\displaystyle\prod_{y_{t}=a}\Phi\Bigg(\frac{a-\textbf{H}_{aug}\hat{\textbf{z}}^-_{t}}{(\textbf{H}_{aug}\textbf{P}^-_{t}\textbf{H}_{aug})^{1/2}}\Bigg)}  \nonumber\\ 
	&\times{\displaystyle\prod_{y_{t}=b}\Bigg(1-\Phi\Bigg(\frac{b-\textbf{H}_{aug}\hat{\textbf{z}}^-_{t}}{(\textbf{H}_{aug}\textbf{P}^-_{t}\textbf{H}_{aug})^{1/2}}\Bigg)}
	\label{likehood}\Bigg).
	\end{align}

	\section{Experiments}

	In this section, two experiments-simulations are conducted to evaluate ColTKF in comparison to the standard augmented  KF (AKF) and TKF$^c$. More specifically, two oscillators (without damping) are considered, which have been utilised frequently in literature  \cite{allik2015tobit},\cite{geng2018distributed},\cite{geng2017tobit},\cite{han2018improved}. In the first experiment, the noises of the state-space model are coloured, while, in the second, they are white. 
	
	Let the state space equations have the form of (\ref{state_x})-(\ref{colored}) with
	$ 	\textbf{H}=
	\begin{bmatrix}
	1&0.5
	\end{bmatrix},$
	$ 	\textbf{A}=
	\begin{bmatrix}
	cos(\omega)&-sin(\omega)\\
	sin(\omega)&\quad cos(\omega)
	\end{bmatrix},$	
	where $ \omega=0.005\cdot2\pi$. The disturbances $\textbf{w}_{1,t} $ and $ w_{2,t} $ are assumed to be normally distributed, i.e., $\textbf{w}_{1,t}\sim N(\textbf{0},\textbf{Q}) $ and $ w_{2,t} \sim N(0, r^2) $, where
	$ 	\textbf{Q}=
	\begin{bmatrix}
	0.01^2&0\\
	0&0.01^2
	\end{bmatrix}$
	and $r^2 = 1$.  In the first experiment, the coloured noise parameters are set as 
	$ \textbf{C}_1 = 
	\begin{bmatrix}
	0.9&0\\
	0&0.9
	\end{bmatrix}
	$
	and $ g_1 = 0.99,$ while in the second experiment, they are equal with $ \textbf{C}_2 = \textbf{0}_2 $ and $ g_2 = 0$, i.e., coloured noises are not considered. Moreover, in the first experiment, the censoring limits are equal with $ a_1 = -5 $ and $ b_1 = 5 $, while in the second experiment, they are equal with  $ a_2 = -1 $ and $ b_2 = 1 $.

	Let the initial state vector be  $ \textbf{x}_0=[5\quad 0]^T $ with covariance matrix $ \textbf{P}_0 = 10^{-3}\cdot\textbf{I}_{2 \times 2} $. Then, by the above parameters, censored (saturated) measurements, $y_t$, are produced for $t=1,2,...,500$. In order the results of the experiments to be more valuable,  the above process is repeated 100 times (Monte Carlo simulations) and in each Monte Carlo simulation the root mean square errors (RMSEs) of the three methods (i.e., AKF, 
	TKF$^c$ and ColTKF) are calculated. Moreover, only in the case of ColTKF, the parameters of AR(1) models, i.e, $\{ \textbf{C}_1,\, g_1\} $ and  $\{ \textbf{C}_2,\, g_2\} $ are assumed to be unknown for both experiments; thus, LF (\ref{likehood}) is used (in each simulation) in order to estimate them. 
	
	The means of the filters’ RMSEs for the first experiment (for the 100 simulations) are presented in Table \ref{rmse1}, where the means of RMSEs for both coordinates of the state vector $\textbf{x}_t$ are provided. It is clear by Table \ref{rmse1} that the AKF process has a poor performance, since it is not able to deal with censored measurements (see Fig. \ref{fig1}). TKF$^c$ has a better performance than AKF, since it considers the censoring measurements, but, it can not cope with the coloured noises. Finally, the proposed method ( ColTKF) has the best performance overall, since it takes into account: a) the censoring limits in the measurements by calculating the accurate censored moments (Section 2) and b) the heteroskedasticity by estimating the parameters $\{ \textbf{C}_1,\, g_1\} $ via LF (\ref{likehood}). In Fig. \ref{figx1} the methods' estimations for the hidden states vector $\textbf{x}_t$ (yellow plot) are illustrated . It is clear, that AKF and TKF$^c$ provide biased estimations  due to the censoring and the coloured noises, while ColTKF tackles both problems.

	\begin{table}[h!]
		\renewcommand{\arraystretch}{1.3}
		\begin{center}
			\begin{tabular}{ |c|c|c| }
				\hline
				\textbf{Filter}  & \textbf{Mean RMSE of} $\hat{\textbf{x}}_1 $ &  \textbf{Mean RMSE of} $\hat{\textbf{x}}_2 $ \\
				\hline
				AKF  & 10.1292 & 10.4497 \\
				TKF$^c$ & 8.7346 & 9.0072 \\
				\textbf{ColTKF} & \textbf{6.2879} & \textbf{6.9183} \\
				\hline
			\end{tabular}
		\end{center}
		\caption{The means of the RMSEs for the filters AKF, TKF$^c$ and ColTKF, respectively, for the first experiment.}
		\label{rmse1}
	\end{table} 
	
	\begin{figure}[h!]
		\centering
		\includegraphics[width=12cm]{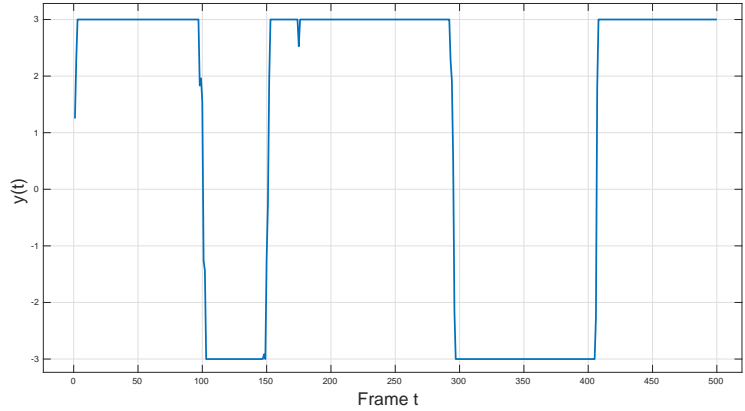}
		\caption{Censored measurements of the first experiment.}
		\label{fig1}
	\end{figure}

	\begin{figure}[h!]
		\centering
		\includegraphics[width=12cm]{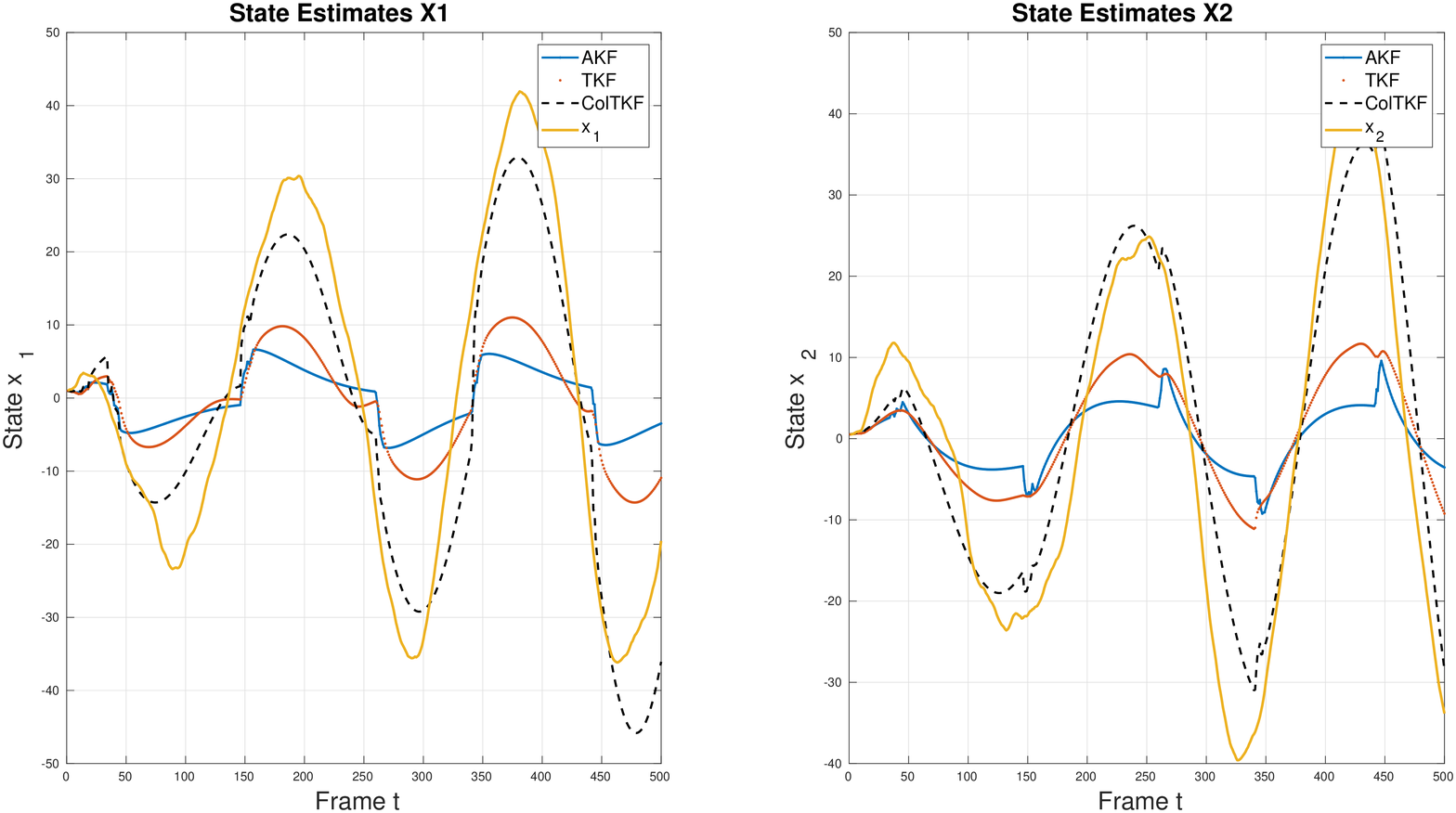}
		\caption{State estimates fo the filters AKF , TKF$^c$ and ColTKF, respectively, for the first experiment.}
		\label{figx1}
	\end{figure}
	
	In the same way as in the first experiment, the means of the filters’ RMSEs for the second experiment are presented in Table \ref{rmse2}. In the second experiment, the AKF process coincides with the corresponding one of the standard KF, since the system noises are not coloured. For the same reason, TKF$^c$ process coincides with the proposed ColTKF. It is clear by Table \ref{rmse2} that the AKF process has a poor performance, since it is not able to deal with censored measurements (see Fig. \ref{fig2}). TKF$^c$ has almost the same performance (not exactly the same) as ColTKF, since, in the proposed method the parameters $\{ \textbf{C}_2,\, g_2\} $ are estimated by LF (\ref{likehood}) and they are not assumed to be known. In Fig. \ref{figx2}  the methods' estimations for the hidden states vector $\textbf{x}_t$ (yellow plot) are illustrated. As it can be seen, TKF$^c$ and ColTKF have the same performance, while AKF provides biased estimates when the measurements belong into the censored region.           
	
	\begin{table}[h!]
		\renewcommand{\arraystretch}{1.3}
		\begin{center}
			\begin{tabular}{ |c|c|c| }
				\hline
				\textbf{Filter}  & \textbf{Mean RMSE of} $\hat{\textbf{x}}_1 $ &  \textbf{Mean RMSE of} $\hat{\textbf{x}}_2 $ \\
				\hline
				AKF & 0.5784 & 0.6218 \\
				TKF$^c$ & 0.4691 & 0.5163 \\
				\textbf{ColTKF} & \textbf{0.4671} & \textbf{0.5156} \\
				\hline
			\end{tabular}
		\end{center}
		\caption{The means of the RMSEs for the filters AKF, TKF$^c$ and ColTKF, respectively, for the second experiment.}
		\label{rmse2}
	\end{table} 
	
	\begin{figure}[h!]
		\centering
		\includegraphics[width=12cm]{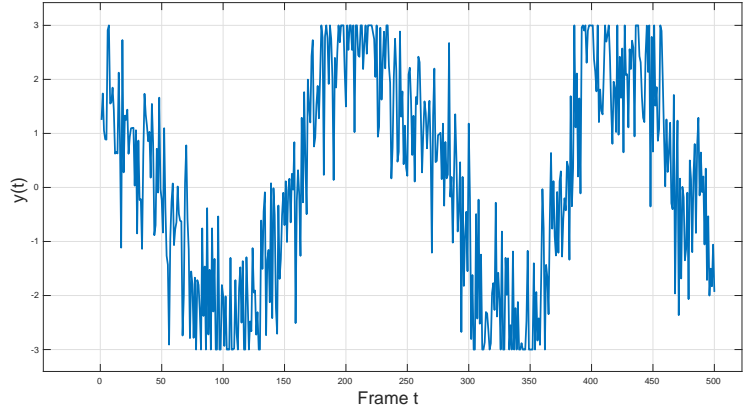}
		\caption{Censored measurements  of the second experiment.}
		\label{fig2}
	\end{figure}
	
	\begin{figure}[h!]
		\centering
		\includegraphics[width=12cm]{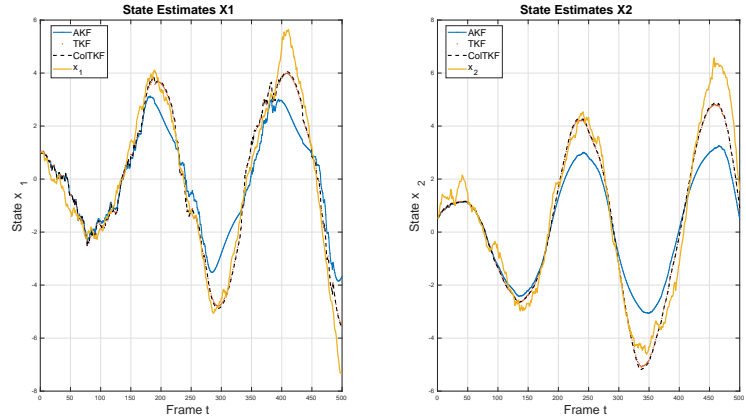}
		\caption{State estimates fo the filters $ AKF $, $TKF^c$ and $ ColTKF $, respectively,  for the second experiment.}
		\label{figx2}
	\end{figure}

	\section{Conclusion}
	
	The aim of this paper is to improve the TKF process by 1) calculating the exact censored moments and 2) by considering coloured noises for the state-space model. To that end, the mgf of a censored normal distribution with two censoring limits was calculated. Then, the exact censored moments were calculated by utilizing the associated mgf. Next, in the proposed method, the augmented approach was used in order to deal with coloured noises which are described by AR(1) models. Furthermore, LF of the censored measurements was provided in order to estimate the unknown parameters of the AR(1) models.  
	
	The proposed method, ColTKF, was evaluated against TKF$^c$ and AKF in two different simulations-experiments. In the first experiment, the state-space model describes the motion of an oscillator, where system's noises are assumed to be coloured, while in the second experiments the noises are assumed to be white. In the proposed method, the AR(1) parameters were set to be unknown, thus, LF was utilised in advance to estimate them. It is worth to mention that in each Monte Carlo simulation, only the estimated parameters of $\{\textbf{C}_1,\,g_1\}$ and $\{\textbf{C}_2,\,g_2\}$ were utilised in  ColTKF. In the first experiment, the results showed that ColTKF outperforms (minimum RMSE) both TKF$^c$ and AKF. This result was expected, since AKF cannot handle the censored measurements and TKF$^c$ cannot handle   coloured noises. In the second experiment,  ColTKF and TKF$^c$ appear to have almost the same performance, since the noises are white, while AKF provides biased estimations, when the measurements are censored. Therefore, the proposed method, ColTKF, in both experiments is able to detect whether coloured noises are present in the system (first experiment) or not (second experiment) and then to estimate the hidden state vectors $\{\textbf{x}_t\}_{t=1}^T$.   
	Moreover, as a step further it would be interesting to extend the proposed method in multidimensional censored measurements with correlated coordinates, to
	describe efficiently real-life problems with censored measurements, when the noises in the state-space model are coloured.
	
%	\section*{Acknowledgement}
	
%	I would like to thank Prof. George Tsaklidis, (Department of Mathematics, Aristotle University of Thessaloniki, Greece) for his useful remarks and indications in writing this paper.

\end{document}